\newlength{\figurewidth}
\newlength{\smallfigurewidth}
\long\def\ignore#1{\vskip 0pt}
\newcommand{\Oh}[1]{\ensuremath{\mathcal{O}\!\left({#1}\right)}}
\theoremstyle{plain}
\newtheorem{theorem}{Theorem}
\newtheorem{lemma}[theorem]{Lemma}
\newtheorem{corollary}[theorem]{Corollary}
\theoremstyle{definition}
\theoremstyle{remark}
\newcommand{\occ}{\mathsf{occ}}
\newcommand{\Tdelta}{T_\delta}
\newcommand{\gzip}{{\sf gzip}\xspace}
\newcommand{\xz}{{\sf xz}\xspace}
\newcommand{\scan}{{\sf scan}\xspace}
\newcommand{\fmo}{{\sf fm-oppm}\xspace}
\newcommand{\sso}{{\sf simd-oppm}\xspace}
\newcommand{\smi}{{\sf smi}\xspace}
\newcommand{\ssmi}{{\sf smi}\xspace}
\newcommand{\lsk}{{\sf lsk}\xspace}
\newcommand{\csa}{{\sf csa}\xspace}
\newcommand{\prcz}{{\bf prices}\xspace}
\newcommand{\ecgz}{{\bf ecg}\xspace}
\newcommand{\randwz}{{\bf rwalk}\xspace}
\def\rdown#1{\lfloor #1 \rfloor}
\newcolumntype{R}{>{\raggedleft\arraybackslash}X}
\begin{document}

\title{\textbf{A Compact Index for Order-Preserving Pattern
Matching}\thanks{A preliminary version of this paper appeared in the Proc. IEEE Data Compression Conference, DCC~2017, Snowbird, UT, USA, 2017~\protect{\cite{dcc/DecaroliGM17}}.}}

\author[1]{Gianni Decaroli}
\author[2]{Travis Gagie}
\author[1,3]{Giovanni Manzini}
\affil[1]{Computer Science Institute, University of Eastern Piedmont, Italy}
\affil[2]{Diego Portales University, Santiago, Chile}
\affil[3]{IIT-CNR, Pisa, Italy}

\date{}
\maketitle
\thispagestyle{empty}

\begin{abstract}

Order-preserving pattern matching was introduced recently but it has already attracted much attention. Given a reference sequence and a pattern, we want to locate all substrings of the reference sequence whose elements have the same relative order as the pattern elements. 
For this problem we consider the offline version in which we build an index for the reference sequence so that subsequent searches can be completed very efficiently. 
We propose a space-efficient index that works well in practice despite its lack of good
worst-case time bounds. Our solution is based on the new approach of
decomposing the indexed sequence into an {\em order component}, containing
ordering information, and a {\em $\delta$ component}, containing information
on the absolute values. Experiments show that this approach is viable, faster than the available alternatives, and it is the first one offering simultaneously {\em small space usage} and {\em
fast retrieval}.

\end{abstract}

\section{Introduction}

The problem of {Order-Preserving Pattern Matching} consists in finding,
inside a numerical sequence $T$, all subsequences whose elements are in a
given relative order. For example, if the pattern is $P=(1,2,3,4,5)$ we need
to find all increasing subsequences of length five; so if $T=(10, 20, 25, 30,
31, 50, 47,49)$ we have a first match starting with the value 10, a second
match starting with the value 20, and no others.

This problem is a natural generalization of the classic exact matching
problem where we search for subsequences whose values are exactly those of
the pattern. Order-preserving matching is useful to search for trends in time
series like stock market data, biomedical sensor data, meteorological data,
{\it etc.} In the last few years this
problem has received much attention. Not surprisingly, most of the results
are generalizations of algorithms and techniques used for exact matching.
In~\cite{isaac/BelazzouguiPRV13,ipl/ChoNPS15,tcs/KimEFHIPPT14,
ipl/KubicaKRRW13} the authors propose online solutions inspired by the classical
linear time Knuth-Morris-Pratt and Boyer-Moore
algorithms~\cite[Chap.~2]{Gusfield:97}. In~\cite{spire/CrochemoreIKKLPRRW13}
the authors consider the offline problem in which $T$ can be preprocessed and
propose an index that generalizes the classical Suffix Tree data
structure~\cite[Chap.~5]{Gusfield:97}. Finally
in~\cite{stringology/CantoneFK15,spire/ChhabraGT15,stringology/ChhabraKT15,ipl/ChhabraT16,
FaroK16,cpm/GawrychowskiU14} the authors consider approaches based on the
concept of filtration and seminumerical matching~\cite[Chap.~4]{Gusfield:97}.

In this paper we extend to Order-Preserving Matching another well known idea
of exact matching: simultaneously compressing and indexing a sequence of
values~\cite{csur/NavarroM07}. In Section~\ref{sec:algo} we show how to compactly represent a sequence
$T$ so that given a pattern $P$ we can efficiently report all subsequences
of $T$ whose elements are in the same relative order as the elements of $P$.
Our contribution is based on the new idea of decomposing the sequence $T$
into two components: the {\em order component} and the {\em $\delta$
component}. Informally, the order component stores the information about the
relative order of the elements of $T$ inside a window of a preassigned size,
while the $\delta$ component contains the information required for
reconstructing $T$ given the order component. The order component is stored
into a compressed suffix array while the $\delta$ component is stored using
an ad-hoc compression technique.

To search for a pattern we compute its ordering information and then we
search for it in the compressed suffix array of the order component. Since the
information in the order component is only partial, this search gives us a
list of potential candidates which are later verified using the $\delta$
component. In other words, the search in the compressed suffix array is a
sort of filtering phase that uses the index to quickly select a set of
candidates, discarding all other subsequences in $T$ that certainly do not
match. 


The overall efficiency of our approach depends on some parameters of the
algorithm whose influence will be experimentally analyzed in
Section~\ref{sec:experiments}. The bottom line is that our index takes
roughly the same space as the compressor \gzip\ and can report the order-preserving occurrences of a pattern an order of magnitude faster than simply scanning the input file. Indeed, in Section~\ref{sec:experiments} we show that our solution is significantly faster than the current fastest online algorithm~\cite{spe/TSOT16} and the offline algorithm in~\cite{stringology/ChhabraKT15} which is also based on the idea of building a ``partial'' index to speed up the search.

The complete source code for the index construction and search algorithms are available at the repository \url{https://gitlab.com/manzai/order-preserving-index}.

\section{Problem formulation and previous results}\label{sec:formulation}

Let $T[1,n]$ denote a sequence of $n=|T|$ numerical values. We write $T[i]$
to denote the $i$-th element and $T[j,k]$ to denote the subsequence
$T[j]T[j+1]\cdots T[k]$. Given two sequences $P$, $Q$, we say that they are
{\em order isomorphic}, and write $P\approx Q$, if $|P|=|Q|$ and the relative
order of $P$'s and $Q$'s elements is the same, that is
\begin{equation}\label{eq:oiso}
P[i] < P[j] \Longleftrightarrow Q[i] < Q[j]\qquad \mbox{for }
1 \leq i,j \leq |P|.
\end{equation}
Hence $(1,3,4,2)$ is order isomorphic to $(100,200,999,101)$ but not to
$(1,3,4,5)$. Given a reference sequence $T[1,n]$ and a pattern $P[1,p]$ the
{\em order-preserving pattern matching problem} consists in finding all
subsequences $T[i+1, i+p]$ such that $T[i+1,i+p]\approx P[1,p]$. In this
paper we consider the offline version of the problem in which the sequence
$T$ is given in advance and we are allowed to preprocess it in order to
speed up subsequent searches.

The first algorithms for the order-preserving pattern matching problem were designed
for the online case, where one is allowed to preprocess the pattern but not the text, 
and inspired by the classical Knuth-Morris-Pratt and Boyer-Moore
algorithms~\cite{isaac/BelazzouguiPRV13,ipl/ChoNPS15,tcs/KimEFHIPPT14,ipl/KubicaKRRW13}.
The proposed algorithms have guaranteed linear time worst-case complexity or
sublinear time average complexity. However, the best
results in practice are obtained by algorithms based on the concept of
filtration, in which some sort of ``order-preserving'' fingerprint is applied
to the text and the pattern~\cite{spire/ChhabraGT15,%
wea/ChhabraT14,ipl/ChhabraT16,FaroK16}. The practical performance of these algorithms can be further improved by exploiting the Single Instruction Multiple Data (SIMD) parallelism now commonly offered by modern processors through the SSE instruction set~\cite{stringology/CantoneFK15,spe/TSOT16}. Note that all the algorithms based on filtration and/or SIMD are superlinear in the worst case.

For the offline problem, Crochemore et al.~\cite{tcs/CrochemoreIKKLP16} showed
how, given a sequence $T[1,n]$, in $\Oh{n \log (n) / \log \log n}$ time we
can build an $\Oh{n \log n}$-bit index such that later, given a pattern
$P[1,p]$, we can return the starting positions of all the \textsf{occ}
order-preserving matches of $P$ in $T$ in optimal $\Oh{m + \mathsf{occ}}$
time. A more space-economical solution is proposed in~\cite{esa/GagieMV17}
consisting of an index taking $\Oh{n \log\log n}$ extra bits in addition to
the text. The one in~\cite{esa/GagieMV17} is the first index for
order-preserving matching that uses $o(n\log n)$ bits with guaranteed worst
case search bounds.  Its weaknesseses are that it can handle only patterns whose
length is less than a given bound that is polylogarithmic in $n$, and that it
returns the position of only one match (if there is one).

In this paper we are interested in practical approaches that work well in
practice even if they do not have competitive worst case bounds on the search
cost. In~\cite{stringology/ChhabraKT15} Chhabra et al. show how to speedup
search building an FM-index~\cite{isci/FerraginaM01,FM05} on the binary string
expressing whether in the input text each element is smaller or larger than
the next one. Our proposal can be seen as a generalization of this work:
instead of extracting a binary sequence from $T$ we extract information on
the relative order of the elements inside a sliding window of size $q$. In
addition, we also compute a {\it $\delta$ component} containing the
information not stored in the order component. As a result we obtain the first
compressed representation of a sequence that is simultaneously an index for
order-preserving matching. Both our solution and the one
in~\cite{stringology/ChhabraKT15} use the index to quickly select a set of
candidates, discarding all other subsequences in $T$ that certainly do not
match. However, not only we do use a more ``informative'' index but we also exploit the $\delta$ component to speed up the verification phase obtaining much better performance in practice.

\section{Data representation and search algorithm}\label{sec:algo}

\subsection{The ordering and delta component}

Given a window size $q>1$ and a sequence $T[1,n]$ we define its {\em order
component} $T_o[1,n]$ as follows. For $i=1,\ldots,n$ let $i_q =
\max(1,i-q+1)$ and define
\begin{equation}\label{eq:Todef=}
T_o[i] = \begin{cases}
0.5 & \mbox{if } T[i] < \min T[i_q,i-1] \mbox{ or } i=1\\
k & \mbox{if }
T[i-k] = \max_{i_q\leq j < i} \{T[j] \vert\;  T[j] \leq T[i]\}\,\mbox{ and }\,
T[i-k]=T[i]\\
k+0.5 & \mbox{if }
T[i-k] = \max_{i_q\leq j < i} \{T[j] \vert\; T[j] \leq T[i]\}\,\mbox{ and }\,
T[i-k] < T[i].
\end{cases}
\end{equation}
In other words: if $T[i]$ is the smallest element in $T[i_q,i]$ we set
$T_o[i]=0.5$; otherwise, if $T[i-k]$ is the immediate predecessor of $T[i]$
in $T[i_q,i-1]$ we set $T_o[i]=k$ if $T[i]=T[i-k]$, or $T_o[i]=k+0.5$ if
$T[i]> T[i-k]$. Note that if $T_o[i]\geq 1$ then $T[i-\rdown{T_o[i]}]$ is the
predecessor of $T[i]$ in $T[i_q,i-1]$. The values of $T_o$ belong to the set
$\{0.5,1,1.5,\ldots,q-0.5\}$ which has $2q-1$ elements overall. For example,
if $q=4$ for $T = (3,8,3,5,-2,9,6,6)$ it is $T_o =
(0.5,1.5,2,1.5,0.5,2.5,3.5,1)$.

We call $T_o$ the {order component} for $T$ since it encodes ordering
information for $T$'s elements within a size-$q$ window. Formally $T_o$
depends also on $q$ but for simplicity we omit it from the notation.
Obviously, if $P\approx Q$ then $P_o = Q_o$. However, we are interested in
finding the order-preserving occurrences of $P$ within a longer reference
sequence $T$ and we will make use of the following more general result.

\begin{lemma}\label{lemma:main}

Let $i$ be such that $P[1,p] \approx T[i+1,i+p]$. Then, if $P_o[j]$ is whole
it is $T_o[i+j] = P_o[j]$; if $P_o[j]$ is fractional it is $T_o[i+j] =
P_o[j]$ for every $j$ such that
\begin{equation}\label{eq:lemma:main}
2 \leq j \leq p \qquad \mbox{and}\qquad (j - \rdown{T_o[i+j]}) \geq 1.
\end{equation}
\end{lemma}

\begin{proof}

If $k = P_o[j]$ is whole, then by definition $P[j]$ is equal to $P[j-k]$ and
different from any character in $P[j-k+1,j-1]$. Since $P[1,p] \approx
T[i+1,i+p]$ we must similarly have $T[i+j]=T[i+j-k]$ and $T[i+j]$ different
from any character in $T[i+j-k+1,i+j-1]$. Hence $T_o[i+j] = k = P_o[j]$ as
claimed.

Let $P_o[j]$ be fractional and such that~\eqref{eq:lemma:main} is satisfied.
Let $w = \min(q,i+j)$ and $v=\min(q,j)$. Note that $w$ (resp. $v$) is the
size of the subsequence which is considered for determining $T_o[i+j]$ (resp.
$P_o[j]$). Clearly $w\geq v$.

If $T_o[i+j]=0.5$, then $T[i+j]$ is the smallest element in the subsequence
$T[i+j-w+1,i+j]$. A fortiori $T[i+j]$ is the smallest element in
$T[i+j-v+1,i+j]$. The hypothesis $P[1,p] \approx T[i+1,i+p]$ implies that
likewise $P[j]$ must be the smallest element in $P[j-v+1,j]$ so it must be
$P_o[j]=0.5$. Assume now $T_o[i+j]\geq 1 $ and let $\ell_j = j -
\rdown{T_o[i+j]}$. By construction $T[i+\ell_j]$ is the immediate predecessor
of $T[i+j]$ in the subsequence $T[i+j-w+1,i+j]$. The condition $j -
\rdown{T_o[i+j]} \geq 1$ implies $1 \leq \ell_j < j$. 

In other words, if $T_o [i + j]$ is fractional and the distance $\rdown{T_o [i + j]}$ it points back to $T [i + j]$'s predecessor in $T [i_q,i + j - 1]$ is less $j$, then that predecessor is in $T[i + 1, i + j - 1]$; therefore, since $T[i + 1,i + p] \approx P [1,p]$, the distance back from $P[j]$ to it's predecessor is the same. We conclude that $P_o [j] = T_o [i + j]$ as claimed.
\end{proof}


\begin{corollary}\label{cor:main}
If $P[1,p]\approx T[i+1,i+p]$ we must have
\begin{equation}\label{eq:main_cor}
T_o[i+2,i+p] = x_2\,  x_3\,  \cdots  x_{q-1}\,  P_o[q]\,  P_o[q+1]  \cdots  P_o[p]
\end{equation}
where for $j=2,\ldots,q-1$ it is:  $x_j = P_o[j]$ if $P_o[j]$ is whole, and
either $x_j = P_o[j]$ or $x_j\geq j$ if $P[j]$ is fractional.\hfill\qed 
\end{corollary}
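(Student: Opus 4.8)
The plan is to derive Corollary~\ref{cor:main} directly from Lemma~\ref{lemma:main} by simply partitioning the index range $2 \leq j \leq p$ according to whether the lemma's side condition $(j - T_o[i+j]) \geq 1$ is guaranteed to hold. First I would recall the bound that is stated in the text immediately before the corollary: since $T_o[i+j]$ is always at most $q-1$ (it records a predecessor distance within a window of size at most $q$), for every index $j$ with $j \geq q$ we automatically have $j - T_o[i+j] \geq q - (q-1) = 1$. Hence the side condition of Lemma~\ref{lemma:main} is satisfied for all such $j$, and the lemma gives $T_o[i+j] = P_o[j]$ outright. This justifies writing $P_o[q]\,P_o[q+1]\cdots P_o[p]$ for the tail of the displayed string in~\eqref{eq:main_cor}.

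Next I would handle the initial range $2 \leq j \leq q-1$, where the side condition need not hold. Here I would argue by cases on whether $j - T_o[i+j] \geq 1$. If it does hold, Lemma~\ref{lemma:main} applies and forces $T_o[i+j] = P_o[j]$, so setting $x_j = T_o[i+j]$ we land in the first alternative $x_j = P_o[j]$. If instead $j - T_o[i+j] < 1$, then $T_o[i+j] \geq j$, which is precisely the second alternative $x_j \geq j$. In either case the symbol $x_j := T_o[i+j]$ satisfies the stated disjunction, so the head of the string is $x_2\,x_3\cdots x_{q-1}$ with each $x_j$ obeying ``$x_j = P_o[j]$ or $x_j \geq j$.''

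Finally I would assemble the two ranges: by definition of $x_j$ for $2 \leq j \leq q-1$ and by the equality $T_o[i+j] = P_o[j]$ established for $q \leq j \leq p$, the full subsequence $T_o[i+2,i+p]$ equals the concatenation in~\eqref{eq:main_cor}. I do not expect any genuine obstacle here: the corollary is essentially a bookkeeping consequence of the lemma, and the only subtlety is making the split at $j = q$ cleanly, noting that the boundary case $j = q-1$ can still fall under the second alternative when $T_o[i+j]$ is large. The one point worth stating explicitly is that the alternatives in the corollary need not be exclusive (when $j \leq T_o[i+j]$ and coincidentally $T_o[i+j] = P_o[j]$ both could be read as true), but since the corollary only claims ``either/or'' this causes no difficulty.
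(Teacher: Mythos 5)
Your proof is correct and follows essentially the same route as the paper: the paper proves the corollary inline in the paragraph preceding its statement, using exactly your two observations, namely that $T_o[i+j]\leq q-1$ forces the side condition of Lemma~\ref{lemma:main} to hold for all $j\geq q$ (giving the tail $P_o[q]\cdots P_o[p]$), while for $2\leq j\leq q-1$ failure of the side condition means $T_o[i+j]\geq j$. Your write-up merely spells out the case split and the non-exclusivity of the two alternatives more explicitly than the paper does.
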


\begin{proof}
Since $\rdown{T_o[\cdot]} \leq q-1$, by the above lemma if $P[1,p]\approx T[i+1,i+p]$ then for every $j$, $q \leq j \leq p$ it is $T_o[i+j] = P_o[j]$. 

For $j=2,3,\ldots,q-1$ the lemma establishes that $T_o[i+j]$ can be different from $P_o[j]$ only if $P_o[j]$ is fractional and $\rdown{T_o[i+j]}) > j-1$. The latter implies $T_o[i+j] \geq 1$ as claimed. 
\end{proof}

In view of the above corollary, our strategy to solve the order-preserving
matching problem is to build a compressed full text index for the sequence $T_o$. Then,
given a pattern $P[1,p]$ with $p>q$, we compute $P_o[1,p]$ and then the set
of positions $\{i_1, i_2, \ldots, i_m\}$ satisfying~\eqref{eq:main_cor}. Clearly,
we can have $P[1,p]\approx T[i+1,i+p]$ only if $i \in \{i_1,
i_2,\ldots,i_m\}$. Note that finding $\{i_1, i_2, \ldots, i_m\}$ will usually require  more than one distinct search. For example, if $q=4$ and $P=(3, 5, 2, 6, 5, 1, 5)$, then $P_o = (0.5, 1.5, 0.5, 2.5, 3, 0.5, 2)$. Then , the sequences to be sought in $T_o$ are 
$$
\begin{array}{ccc}
(1.5, 0.5, 2.5, 3, 0.5, 2),& (2.5, 0.5, 2.5, 3, 0.5, 2),&
(3.5, 0.5, 2.5, 3, 0.5, 2)\cr
(1.5, 3.5, 2.5, 3, 0.5, 2),& (2.5, 3.5, 2.5, 3, 0.5, 2),&
(3.5, 3.5, 2.5, 3, 0.5, 2).
\end{array}
$$

Clearly, any condition stated in terms of the ordering component only, like Corollary~\ref{cor:main}, can only be a necessary condition for an order-preserving match. Thus, we need to verify the candidate positions satisfying Corollary~\ref{cor:main} in a verification phase using the actual values of the sequence $T$. Since we are interested in indexing {\em and} compressing $T$, instead of simply storing $T$ we introduce a representation that takes advantage of the values in $T_o$ that are already stored in the index.

Given $T[1,n]$ and $T_o[1,n]$, we define a new
sequence $T_\delta[1,n]$ as follows. Let $T_\delta[1] = T[1]$. For
$i=2,\ldots,n$ let $i_q = \max(1,i-q+1)$ and:
\begin{equation}\label{eq:Tdeltadef}
T_\delta[i] =
\begin{cases}
\min T[i_q,i-1] - T[i] & \mbox{if } T_o[i] = 0.5\\
T[i] - T[i-\rdown{T_o[i]}] & \mbox{if } T_o[i]\geq 1.
\end{cases}
\end{equation}
Notice that for $i\geq 2$, $T_\delta[i] \geq 0$. Indeed, if $T_o[i]=0.5$ then
by~\eqref{eq:Todef=} $T[i] < \min T[i_q,i-1]$. If $T_o[i]\geq 1$, since
$T[i-\rdown{T_o[i]}]$ is the immediate predecessor of $T[i]$ in $T[i_q,i]$, it
is $T_\delta[i] \geq 0$. We call $T_\delta$ the {$\delta$ component} of
$T$. While $T_o$ provides information on the ordering of $T$'s elements,
$T_\delta$ contains information on their absolute values. It is straightforward to
verify that given $T_o$ and $T_\delta$ we can retrieve $T$ in linear time.

Summing up, our approach to compress and index a sequence $T$ and support
order-preserving pattern matching is the following: select a window size $q$
and build the components $T_o$ and~$T_\delta$. Then build a compressed index
for $T_o$ and compress $T_\delta$. These two components together constitute
our index. The above description is quite general and can be realized in
different ways. In the following we describe our particular implementation
and experimentally measure its effectiveness.

\subsection{Representation of the components $T_o$ and
$T_\delta$}\label{sec:compr}

We represent $T_o[1,n]$ using a Compressed Suffix Array (\csa) consisting of
a Huffman-shaped Wavelet Tree built on the BWT of the sequence~$T_o$. To this end we use the {\sf csa\_wt} class from the {\sf sdsl-lite} library~\cite{wea/GogBMP14} (in practice we consider the elements of $T_0$ multiplied by two to avoid non-integer values). Given any pattern $p$ the \csa\ can compute in
$\Oh{p}$ time the range of rows $[b,e]$ of the Suffix Array of $T_o$ prefixed
by $p$. To find the actual position in $T_o$ of each occurrence of $p$, the
\csa\ also stores the set of Suffix Array entries containing text positions which
are multiples of a given block size $B$. Then, for each row $r \in
[b,e]$ we move backward in the text using the LF-map until we reach a marked
Suffix Array entry from which we can derive the position in $T_o$ of the
occurrence that prefixes row $r$. The above scheme uses $\Oh{ n + (n/B)\log
n}$ bits of space and can find the position of all (exact) occurrences of $p$
in $T_o$ in $\Oh{|p|+ B\,\occ}$ time, where $\occ = e-b+1$ is the number of
occurrences. Clearly, the parameter $B$, chosen when we build the \csa, 
offers a trade-off between space usage and running time.

We do not use an index for $\Tdelta$. However, during the verification phase
we need to extract (decompress) the values in random portions of
$\Tdelta$. For this reason we split $\Tdelta$ into blocks of size $B$ (i.e., the
same size used for the blocks in the \csa\ of $T_o$) and we compress each
block independently. The $k$-th block consists of the subsequence
$\Tdelta[kB+1,kB+B]$, except for the last block which has size $(n \bmod B)$.
Additionally, we use a header storing the starting position of each block.
Hence, given a block index we can decompress it in $\Oh{B}$ time.

To compactly represent a block of $\Tdelta$ we take advantage of the fact
that the corresponding values in $T_o$ are available during compression and
decompression. Recalling the definition of $\Tdelta[i]$
in~\eqref{eq:Tdeltadef}, we partition the values in $T$ into three classes:
\begin{enumerate}[itemsep=-0.1cm,topsep=-0.1cm]
\item those such that $T[i] < \min T[i_q,i-1]$ are called {\it minimal};
\item those such that $T[i] > \max T[i_q,i-1]$ are called {\it maximal};
\item all other values are called {\it intermediate}.
\end{enumerate}\vspace{0.1cm}
The class of $T[i]$ can be determined by both compressor and
decompressor, the latter using $T_o[i]$, before it is (de)coded. For each block we define
$$
m  = \max \{\Tdelta[i] \;\vert\; i \mbox{ is minimal}\},\qquad\qquad
M  = \max \{\Tdelta[i] \;\vert\; i \mbox{ is maximal}\};
$$
and we store these two values at the beginning of the block. When we
encounter a minimal (resp. maximal) value $T[i]$ we know that the
corresponding value $\Tdelta[i]$ will be in the range $[1,m]$ (resp.
$[1,M]$). When we encounter an intermediate value $T[i]$ if $T_o[i]$ is an
integer then we know that $\Tdelta[i]=0$ and there is nothing to encode. If
$T_o[i]$ is fractional we know that $\Tdelta[i]$ will be in the range
$[1,v-T[i-\rdown{T_o[i]}]-1]$ where $v$ is the smallest element in
$T[i_q,i-1]$ larger than $T[i-\rdown{T_o[i]}]$.

Summing up, compressing a block of $\Tdelta$ amounts to compressing a
sequence of non-negative integers $\ell_1, \ell_2, \ldots, \ell_B$ with the
additional information that for each $\ell_i$ both encoder and decoder know
an upper bound $w_i\geq \ell_i$. We have tested several compressors for this
setting and we got the best results using the {\em log-skewed}
coder~\cite{spe/ManziniR04}. Such an encoder represents an integer $\ell \in
[0,w)$ using at most $\lceil\log_2(w)\rceil$ bits, but if $w$ is not a power
of two the smallest values in the range $[0,w)$ are encoded using fewer than
$\lceil\log_2(w)\rceil$ bits.

\subsection{Searching for a pattern}\label{sec:search}

Given the above representations of $T_o$ and $\Tdelta$, we compute the order-preserving occurrences of a pattern $P$ in $T$ as follows. First we compute
$P_o$ and locate in  $T_o$'s \csa\ the row ranges prefixed by each one of the
sequences satisfying Corollary~\ref{cor:main}. If the window size is $q$
there are at most $(q-1)!$ such sequences. Recall that the basic operation of
a \csa\ is the {\em backward search} in which, given the range of rows
prefixed by a substring $\alpha$ and a character $c$, we find in $\Oh{1}$
time the range of rows prefixed by $c\alpha$. This suggests we compute the
desired set of row ranges with a two-step procedure: first ({\sf Phase 1})
with $p-q+1$ backward search steps we compute the range of rows prefixed by
$P_o[q,p]$; then ({\sf Phase 2}) with additional backward search steps we
compute the range of rows prefixed by $x_2  x_3 \cdots x_{q-1} P_o[q]\cdots
P_o[p]$ for each $(q-2)$-tuple $x_2,\ldots,x_{q-1}$ satisfying the conditions
of Corollary~\ref{cor:main}. Phase 2 can require up to $q!$ backward search
steps, but the number of steps is also upper bounded by $q$ times the number
of row ranges obtained at the end of the phase, which is usually much
smaller.

At the end of Phase 2 we are left with a set of rows, each one representing a
position in $T$ where an order-preserving match can occur. We verify if there
is actually a match ({\sf Phase 3}) by decompressing the corresponding
subsequence of $T$ and comparing it with $P$. Given a row index $r$
representing a position in $T_o$ prefixed by a string $x_2  x_3 \cdots
x_{q-1} P[q]\cdots P[p]$ we use the LF-map to move backwards in $T_o$ until we
reach a marked position, that is, a position in $T_o$ (and hence in $T$)
which is a multiple of the block size $B$ (say position $\ell B$) and marks
the beginning of block~$\ell$. Each time we apply the LF-map we also obtain a
symbol $y_i$ of $T_o$ hence when we reach the beginning of the block we also
have the sequence
$$
y_1\, y_2\, \cdots y_k\, x_2\, x_3\, \cdots x_{q-1} P[q]\cdots P[p]
$$
of $T_o$ values from the beginning of the block till the position
corresponding to $P[p]$. Using this information and the compressed
representation of $T_\delta$ (whose blocks can be accessed independently) we
retrieve the corresponding $T$ values and determine if there is an actual
order-preserving match.

Phase 3 is usually the most expensive since for each candidate the algorithm
has to reach the beginning of the block containing it. We can therefore
expect that its running time will be linearly affected by the block size $B$. Note
that in our implementation Phase 2 and 3 are interleaved: as soon as we have
determined a range of rows prefixed by one of the patterns in
Corollary~\ref{cor:main} we execute Phase 3 for all rows in the range before
considering any other row range.

\begin{figure}
\begin{center}\setlength{\tabcolsep}{7pt}
\begin{tabularx}{0.9\textwidth}{lr l @{\extracolsep{\fill} }}
\hline
{\sf Name}   & \multicolumn{1}{r}{{\sf \# Values}} & {\sf Description} \\\hline
{\sf prices} & 31,559,990 & daily, hourly, and 5min US stock prices\\
{\sf temp}   & 30,505,702 & max and min daily temperature from 424 US stations\\
{\sf ecg}    & 20,138,750 & 22 hours and 23 minutes of ECG data\\
{\sf rwalk}  & 50,000,000 & random walk with integer steps in the range $[-20,20]$\\
{\sf rand}   & 50,000,000 & random integers in the range $[-20,20]$\\
{\sf ran127} & 50,000,000 & random integers in the range $[-127,127]$\\
\hline
\end{tabularx}
\end{center}\caption{Files used in our experiments. All values are 32-bit integers
so the size in bytes of the files is four times the number of values.\label{fig:files}}
\end{figure}

\section{Experimental results} \label{sec:experiments}

Since one of the applications of order-preserving matching is the search of trends in Stock Market data we call our tool ``stock market index'' (\smi\ from now on).  Ours is the first tool combining compression and indexing for the order-preserving matching problem, so we have no direct competitors. We therefore consider separately the compression and search capabilities of our index. We compare the compression ratio achieved by \smi\ with those of \gzip\ and \xz: the former has been the standard compression tool for more than 20 years, while the latter is a more recent compressor based on the Lempel-Ziv-Markov chain algorithm (LZMA) that uses more resources but achieves a significantly better compression. 

We compare \smi's search speed with those of three different tools. As a baseline we used \scan,  a naive algorithm that simply tries to match the pattern in every text position using the verification algorithm outlined in~\cite[Sec.~3]{ipl/ChhabraT16}. Then, we tested a prototype of the algorithm \fmo from~\cite{stringology/ChhabraKT15} that is similar to our approach in that it builds an index to speedup the search. \fmo builds an FM-index of the binary string obtained by replacing each text character with 0 or 1 according to whether the next character is smaller than the current one or not. To search for a pattern \fmo applies the same transformation to the pattern characters and uses the index to quickly determine a set of positions where the pattern may occur. To discard false positives \fmo does a final verification step using the original text. \ignore{similar to the one in our algorithm.}  \ignore{Currently \fmo builds the index at every invocation although it could be easily adapted to store the index alongside the text for successive invocations.}

Finally, we compare \smi\ with the algorithm~\sso~\cite{spe/TSOT16}. \sso does not use an index so it finds order-preserving matchings by testing every text position. However, to verify the matching \sso exploits in a clever way  the SIMD parallelism offered by the SSE instruction set; as a result \sso is in practice  the fastest among the order-preserving matching algorithms that do not use an index~\cite[Section~4]{spe/TSOT16}.

We tested all algorithms using the collection of files listed in Fig.~\ref{fig:files} which includes real and synthetic data. The test files, as well as the source code of all tested algorithms, are available at the repository~\url{https://gitlab.com/manzai/order-preserving-index}. All tests have been performed on a desktop PC with eight Intel-I7 3.40GHz CPUs running Linux-Debian 8.3 using the {\sf gcc} compiler version 4.9.2 and optimization option {\sf -O3}. All tests used a single CPU while the PC was not performing any other significant computation.




\begin{table}[p]
\begin{center}
\begin{tabularx}{0.95\textwidth}{|l|@{\extracolsep{\fill} }l|R|R|R|R|R|R|}
\hline
\multicolumn{2}{|c|}{}     & {prices}& {temp}  & ecg   & rwalk & rand  & ran127 \\\hline\hline
&\smi  $q\!=\!3 $ & 0.35  & 0.31  & 0.27  & 0.34  & 0.35  & 0.43 \\\cline{2-8}
&\smi  $q\!=\!6 $ & 0.37  & 0.33  & 0.29  & 0.35  & 0.35  & 0.43 \\\cline{2-8}$B=32$
&\smi  $q\!=\!9 $ & 0.38  & 0.34  & 0.30  & 0.37  & 0.36  & 0.44 \\\cline{2-8}
&\smi  $q\!=\!12$ & 0.39  & 0.35  & 0.31  & 0.38  & 0.37  & 0.45 \\\hline\hline
&\smi  $q\!=\!3 $ & 0.29  & 0.25  & 0.21  & 0.27  & 0.28  & 0.37 \\\cline{2-8}
&\smi  $q\!=\!6 $ & 0.30  & 0.26  & 0.22  & 0.28  & 0.28  & 0.37 \\\cline{2-8}$B=64$
&\smi  $q\!=\!9 $ & 0.31  & 0.27  & 0.23  & 0.30  & 0.29  & 0.37 \\\cline{2-8}
&\smi  $q\!=\!12$ & 0.32  & 0.28  & 0.24  & 0.30  & 0.30  & 0.37 \\\hline\hline
&\smi  $q\!=\!3 $ & 0.27  & 0.23  & 0.19  & 0.24  & 0.26  & 0.34 \\\cline{2-8}
&\smi  $q\!=\!6 $ & 0.28  & 0.24  & 0.20  & 0.26  & 0.26  & 0.34 \\\cline{2-8}$B=96$
&\smi  $q\!=\!9 $ & 0.29  & 0.25  & 0.21  & 0.27  & 0.27  & 0.35 \\\cline{2-8}
&\smi  $q\!=\!12$ & 0.30  & 0.26  & 0.22  & 0.28  & 0.27  & 0.35 \\\hline\hline
&\gzip \textsf{-{}-best}  & 0.37  & 0.24  & 0.19  & 0.36  & 0.24  & 0.35 \\\cline{2-8}
&\xz   \textsf{-{}-best}  & 0.24  & 0.17  & 0.12  & 0.23  & 0.18  & 0.29 \\\hline\hline
\end{tabularx}
\end{center}\caption{\smi overall space usage for different values of $B$ and $q$. Each value is the ratio between the size of the index over the size of the input file, both expressed in bytes. Also shown are the space usage for \gzip\ and \xz.\label{tab:space_strict}}
\end{table}




\begin{table}[p]
\begin{center}
\begin{tabularx}{0.95\textwidth}{|l|l|l|R|R|R|R|R|R|}
\hline
\multicolumn{3}{|c|}{} & {prices}& {temp}  & ecg   & rwalk & rand  & ran127\\\hline\hline
&$q\!=\!3 $&\csa~~~~& 0.10  & 0.10  & 0.10  & 0.10  & 0.10  & 0.10  \\\cline{3-9}
&          &\lsk    & 0.25  & 0.21  & 0.17  & 0.24  & 0.25  & 0.33  \\\cline{2-9}$B=32$
&$q\!=\!6 $&\csa    & 0.14  & 0.14  & 0.13  & 0.13  & 0.14  & 0.13  \\\cline{3-9}
&          &\lsk    & 0.23  & 0.19  & 0.16  & 0.22  & 0.21  & 0.30  \\\cline{2-9}
&$q\!=\!9$ &\csa    & 0.16  & 0.16  & 0.15  & 0.15  & 0.17  & 0.15  \\\cline{3-9}
&          &\lsk    & 0.22  & 0.18  & 0.15  & 0.21  & 0.20  & 0.29  \\\hline\hline
&$q\!=\!3$ &\csa   & 0.09  & 0.09  & 0.09  & 0.08  & 0.09  & 0.08  \\\cline{3-9}
&          &\lsk   & 0.20  & 0.16  & 0.12  & 0.18  & 0.19  & 0.28  \\\cline{2-9}$B=64$
&$q\!=\!6$ &\csa   & 0.12  & 0.13  & 0.12  & 0.12  & 0.12  & 0.12  \\\cline{3-9}
&          &\lsk   & 0.18  & 0.13  & 0.11  & 0.17  & 0.16  & 0.25  \\\cline{2-9}
&$q\!=\!9$ &\csa   & 0.14  & 0.15  & 0.13  & 0.14  & 0.15  & 0.14  \\\cline{3-9}
&          &\lsk   & 0.17  & 0.12  & 0.10  & 0.16  & 0.14  & 0.23  \\\hline\hline
\end{tabularx}
\end{center}\caption{Space usage of $T_o$'s compressed
suffix array (\csa) and of $\Tdelta$'s log-skewed encoding (\lsk) for
different values of $B$ and $q$. The reported values are the ratio between
the size of the \csa/\lsk\ file over the size of the input file both expressed in
bytes.\label{tab:compr1_strict}}
\end{table}

\subsection{Space usage}\label{sec:exp:space}

Table~\ref{tab:space_strict} shows the space usage of our index for different
values of $B$ and $q$ compared to the compression tools \gzip\ and~\xz. We can see that \smi's space usage is essentially at par with
\gzip's: it can be smaller or larger depending on the block size $B$. As
expected \xz\ compression is clearly superior to both.
Table~\ref{tab:compr1_strict} shows the relative space usage, within \smi, of the compressed suffic array (\csa) for $T_o$ vs. the long-skewed encoding (\lsk) of $\Tdelta$. For a fixed block size $B$, as the window size $q$ increases, the cost of storing $\Tdelta$ decreases while the \csa\ size
increases. This was to be expected since a larger $q$ means that more
information is contained in $T_o$. For a fixed window size $q$, as the block
size $B$ increases, the space of both $\Tdelta$ and $T_o$'s \csa\ decreases
since both structures have an extra overhead for each block. However,
increasing the block size decreases the search speed as discussed in the
following section.

\subsection{Search time}\label{sec:exp:time}

All search tests involved 1000 patterns of length 10, 15 and 20 extracted from the same file where the patterns are later sought, so every search reports at least one occurrence. The patterns were extracted selecting 1000 random positions in the file. Note that patterns occurring more often are more likely to be selected so this setting is the least favorable for our algorithm: like all index-based algorithms, it is much faster when the pattern does not occur, or occurs relatively few times.

Since Phases 1 and 2 of our algorithm produce a set of candidates that must
be verified in Phase 3, in our first experiment we measure how effective are
the first two Phases in producing only a small number of candidates which are
later discarded (that is, how effective are Phases~1 and~2 in producing a
small number of false positives). The results of this experiment are reported
in Table~\ref{tab:fp_strict}. Note that the number of false positives does not depend on the block size $B$ that only affects the space usage and running time.


We see that for patterns of length $10$ the number of false positives can be very high especially for $q=3$ and $q=9$. This is due to two different phenomena. For $q=3$, the order component $T_o$ has the least amount of information on the actual ordering, so even after Phase 2 the number of false positives is significant. For $q=9$, $T_o$ is much more informative but during Phase~1 we search in $T_o$ \csa\ only the two symbols $P_o[9,10]$ (see Section~\ref{sec:search}) and therefore we are left with a large number of candidates. In Phase~2 however, we make use of $P_o[2,8]$ and the number of false positives drops below those of the case $q=3$.

As the size of the pattern increases, we see that the number of false positives decreases significantly. For $|P|=20$ at the end of Phase~2 the number of false positives is at most 64\% of the number of occurrences for $q=3$ and such percentage drops to $5\%$ for $q=6$ and $1\%$ for $q=9$. The bottom line is that the number of false positives appears to be reasonably small except in extreme cases when $q$ is very small or when $q$ is very close to the length of the pattern. Note also that there is a large variability in the number of false positives across the different input files.





\begin{table}[t]\renewcommand{\arraystretch}{1}
\begin{center}

\begin{tabularx}{0.95\textwidth}{|l|@{\extracolsep{\fill} }l |R|R|R|R|R|R|}
\hline \multicolumn{2}{|l|}{$|P|=10$}
                & {prices}& {temp}  & ecg   & rwalk & rand  & ran127 
                \\\hline\hline\multicolumn{2}{|l|}{ave \# occs}
                & 431.45 &   8.42 &19801.90 &802.39 &   3.03 &  10.66 \\\hline\hline
 $q=3$& Phase~1 &   8.02 & 311.89 &   1.76 &  12.81 & 2984.41 & 1470.69 \\\cline{2-8}
      & Phase~2 &   4.81 & 182.44 &   1.19 &   8.13 & 2035.82 & 1033.44 \\\hline\hline
 $q=6$& Phase~1 &  26.31 & 274.70 &   5.47 &  35.30 & 808.16  & 563.43  \\\cline{2-8}
      & Phase~2 &   1.83 &  10.11 &   1.04 &   2.81 &  53.33  &  53.21  \\\hline\hline
 $q=9$& Phase~1 &1193.64 &30119.77&  42.96 & 1325.32&104819.21&50448.11 \\\cline{2-8}
      & Phase~2 &   3.72 &  15.79 &   1.05 &   6.06 & 221.74  & 308.66  \\\hline
\end{tabularx}

\vspace{0.5cm}

\begin{tabularx}{0.95\textwidth}{|l|@{\extracolsep{\fill} }l |R|R|R|R|R|R|}
\hline \multicolumn{2}{|l|}{$|P|=15$}
                & {prices} & {temp}   & ecg    & rwalk  & rand   & ran127 \\\hline\hline\multicolumn{2}{|l|}{ave \# occs}
                & 59.70 &   1.00 & 2015.25&   2.38 &   1.00 &   1.00 \\\hline\hline
 $q=3$& Phase~1 &  1.37 &  10.10 &   2.01 &  23.62 &  49.70 & 119.48 \\\cline{2-8}
      & Phase~2 &  1.16 &   6.64 &   1.23 &  15.36 &  34.53 &  85.26 \\\hline\hline
 $q=6$& Phase~1 &  3.31 &   1.47 &   4.48 &  20.26 &   1.20 &   2.25 \\\cline{2-8}
      & Phase~2 &  1.02 &   1.02 &   1.03 &   2.21 &   1.01 &   1.11 \\\hline\hline
 $q=9$& Phase~1 & 11.04 &   5.39 &  13.03 & 438.92 &   2.11 &   9.19 \\\cline{2-8}
      & Phase~2 &  1.04 &   1.00 &   1.03 &   3.24 &   1.00 &   1.06 \\\hline
\end{tabularx}

\vspace{0.5cm}

\begin{tabularx}{0.95\textwidth}{|l|@{\extracolsep{\fill} }l |R|R|R|R|R|R|}
\hline \multicolumn{2}{|l|}{$|P|=20$}
                & {prices} & {temp}   & ecg    & rwalk  & rand   & ran127 \\\hline\hline\multicolumn{2}{|l|}{ave \# occs}
                &   1.49 &   1.00 & 147.80 &   1.00 &   1.00 &   1.00 \\\hline\hline
 $q=3$& Phase~1 &   1.20 &   1.05 &   2.60 &   1.39 &   1.27 &   1.89 \\\cline{2-8}
      & Phase~2 &   1.12 &   1.03 &   1.42 &   1.29 &   1.17 &   1.64 \\\hline\hline
 $q=6$& Phase~1 &   2.49 &   1.00 &   5.34 &   1.08 &   1.00 &   1.00 \\\cline{2-8}
      & Phase~2 &   1.00 &   1.00 &   1.05 &   1.02 &   1.00 &   1.00 \\\hline\hline
 $q=9$& Phase~1 &  47.19 &   1.00 &  17.17 &   2.10 &   1.00 &   1.00 \\\cline{2-8}
      & Phase~2 &   1.01 &   1.00 &   1.02 &   1.01 &   1.00 &   1.00 \\\hline
\end{tabularx}

\end{center}\caption{False positives as a function of the window size~$q$ for 1000 patterns of length 10, 15 and 20. The first row shows the average number
of actual occurrences for the patterns in the test set. The other rows show
the ratios between candidates and actual occurrences at the end of Phase 1
and 2, obtained as the sum of candidates over all patterns over the sum of actual occurrences over all patterns.\label{tab:fp_strict}}
\end{table}




\begin{table}[t] 
\begin{center}\renewcommand{\arraystretch}{1}
\begin{tabularx}{0.95\textwidth}{|l|l|R|R|R|R|R|R|R|}
\hline 
\multicolumn{2}{|l|}{$|P|=10$}     
                & {prices} & {temp}   & ecg    & rwalk  & rand   & ran127 \\\hline\hline
\multicolumn{2}{|l|}{ave \#  occ} 
                & 431.45 &   8.42 &19801.90&802.39  &   3.03 &  10.66\\\hline\hline
&\smi $q\!=\!3$ &   7.52 &   6.16 &  64.82 &  27.17 &  26.96 &  46.27 \\\cline{2-8} $B\!=\!32$
&\smi $q\!=\!6$ &   4.15 &   0.62 &  75.82 &  15.01 &   1.35 &   4.28 \\\cline{2-8}
&\smi $q\!=\!9$ &  13.81 &   2.13 &  91.06 &  47.47 &  10.55 &  40.20 \\\hline\hline
&\smi $q\!=\!3$ &  13.68 &  10.89 & 125.11 &  50.37 &  50.27 &  87.06 \\\cline{2-8} $B\!=\!64$
&\smi $q\!=\!6$ &   7.59 &   1.09 & 151.87 &  28.12 &   2.40 &   7.75 \\\cline{2-8}
&\smi $q\!=\!9$ &  23.40 &   3.07 & 182.84 &  83.77 &  16.16 &  65.70 \\\hline\hline\multicolumn{2}{|l|}{\fmo}
                &  70.42 &  74.74 & 256.85 & 185.07 & 399.92 & 398.79 \\\hline\hline\multicolumn{2}{|l|}{\scan}
                & 246.19 & 239.98 & 134.51 & 408.60 & 420.18 & 442.69 \\\hline
\end{tabularx}

\vspace{0.5cm}


\begin{tabularx}{0.95\textwidth}{|l|l|R|R|R|R|R|R|R|}
\hline 
\multicolumn{2}{|l|}{$|P|=15$}     
                & {prices} & {temp}   & ecg    & rwalk  & rand   & ran127 \\\hline\hline
\multicolumn{2}{|l|}{ave \#  occ} 
                &  59.70 &   1.00 & 2015.25&   2.38 &   1.00 &   1.00 \\\hline\hline
&\smi $q\!=\!3$ &   0.19 &   0.03 &   6.96 &   0.17 &   0.17 &   0.40 \\\cline{2-8} $B\!=\!32$
&\smi $q\!=\!6$ &   0.19 &   0.02 &   7.58 &   0.05 &   0.02 &   0.02 \\\cline{2-8}
&\smi $q\!=\!9$ &   0.21 &   0.02 &   8.83 &   0.13 &   0.03 &   0.03 \\\hline\hline
&\smi $q\!=\!3$ &   0.35 &   0.06 &  13.28 &   0.30 &   0.30 &   0.71 \\\cline{2-8} $B\!=\!64$
&\smi $q\!=\!6$ &   0.37 &   0.02 &  15.13 &   0.08 &   0.02 &   0.03 \\\cline{2-8}
&\smi $q\!=\!9$ &   0.39 &   0.03 &  17.90 &   0.19 &   0.04 &   0.04 \\\hline\hline\multicolumn{2}{|l|}{\fmo}
                &   2.52 &   2.60 &  50.82 &   6.18 &  21.05 &  21.74 \\\hline\hline\multicolumn{2}{|l|}{\scan}
                & 248.71 & 240.32 & 134.43 & 403.49 & 406.74 & 454.96 \\\hline
\end{tabularx}

\vspace{0.5cm}

\begin{tabularx}{0.95\textwidth}{|l|l|R|R|R|R|R|R|R|}
\hline 
\multicolumn{2}{|l|}{$|P|=20$}
                & {prices} & {temp}   & ecg    & rwalk  & rand   & ran127 \\\hline\hline
\multicolumn{2}{|l|}{ave \#  occ} 
                &   1.49 &   1.00 & 147.80 &   1.00 &   1.00 &   1.00 \\\hline\hline
&\smi $q\!=\!3$ &   0.01 &   0.01 &   0.66 &   0.01 &   0.01 &   0.01 \\\cline{2-8} $B\!=\!32$
&\smi $q\!=\!6$ &   0.02 &   0.02 &   0.63 &   0.02 &   0.02 &   0.02 \\\cline{2-8}
&\smi $q\!=\!9$ &   0.03 &   0.03 &   0.74 &   0.03 &   0.03 &   0.03 \\\hline\hline
&\smi $q\!=\!3$ &   0.02 &   0.01 &   1.16 &   0.02 &   0.02 &   0.02 \\\cline{2-8} $B\!=\!64$
&\smi $q\!=\!6$ &   0.03 &   0.03 &   1.19 &   0.03 &   0.03 &   0.03 \\\cline{2-8}
&\smi $q\!=\!9$ &   0.04 &   0.03 &   1.41 &   0.04 &   0.04 &   0.04 \\\hline\hline\multicolumn{2}{|l|}{\fmo}
                &   0.12 &   0.13 &   8.86 &   0.23 &   1.18 &   1.31 \\\hline\hline\multicolumn{2}{|l|}{\scan}
                & 241.21 & 234.43 & 133.22 & 402.90 & 405.34 & 468.98 \\\hline
\end{tabularx}

\end{center}\caption{Average \smi running times, in milliseconds,
for searching 1000 random patterns of length 10, 15, and 20 for different values of $B$ and $q$ compared with the running times of the algorithms \scan and \fmo. Running times do not
include the time to load/compute the compressed index (for \ssmi and \fmo) or the uncompressed
text (for \scan and \fmo).\label{tab:speed}}
\end{table}

Table~\ref{tab:speed} reports the average running times of \smi\ for different values of $q$ and $B$, compared with those of the naive algorithm \scan and of the algorithm \fmo from~\cite{stringology/ChhabraKT15} (see description at the beginning of Section~\ref{sec:experiments}). We see that our algorithm is the fastest for every instance and the difference is by at least an order of magnitude for the larger pattern lengths and $B=32$. If we look at the running times for pattern lengths 15 and 20 we see that, as expected for an index data structure, the running time grows linearly with the number of occurrences and does not depend on the size of the input. For example, the search is much faster for {\sf rand} than for {\sf ecg} since the latter is much shorter but has a much larger number of occurrences. For patterns of length 10 we see that the running time can be influenced by the presence of a large number of false positives at the end of Phase 2. For example, the two files {\sf rwalk} and {\sf ran127} have the same length; the former has a much higher number of occurrences, but the search times are relatively close since the latter has a much higher number of false positives. Finally, we observe that doubling the block size from $B=32$ to $B=64$ roughly doubles the search time for all values of $q$. Since the size of the block influences the running time only in Phase 3, when we decompress portions of $\Tdelta$, this is an indirect confirmation that Phase~3 is indeed the most expensive of the algorithm. Since increasing the block size from $B=32$ to $B=64$ improves compression only slightly (see Table~\ref{tab:compr1_strict}), the lesson we learn here is that larger blocks should be used only when space is at a premium.

The algorithm \fmo is also based on an index: before the search \fmo builds an FM-index on the binary string encoding whether the next character is smaller than the current one or not. Therefore, in some sense \fmo search operations are similar to the ones we would have for \smi with $q=2$ (in our implementation we can only have $3 \leq q \leq 128$ so we could not test this). Note however that we also keep explicit track of equalities of text values, so with $q=3$ each entry in $T_o$ can assume five distinct values instead of two as in \fmo. For $|P|=10$ we see that \fmo's running times are between 4 and 14 times \smi's corresponding running times for $q=3$ and $B=32$. For larger values of $|P|$ the gap between \fmo and \smi running times increases. With some additional tests, not reported here, we found that the likely reason is that even for $|P|=20$ the search in \fmo's index still produces a large number of false positives that have to be checked and discarded.

As expected, \scan running time is roughly proportional to the input file length and is scarcely affected by the pattern length and the number of occurrences. Note that for $P=15$ and $P=20$ \scan is at least two orders of magnitude slower than \smi\ for $B=32$, with the only exception of the input file {\sf ecg} and $|P|=15$ which have an unusually high average number of occurrences (more than 2000).

In Table~\ref{tab:speed_sso} we report a comparison including the algorithm \sso~\cite{spe/TSOT16}, which is the fastest algorithm among those that do not use an index. The available version of \sso is optimized for small values so it only accepts input files with values in the range $[-127,127]$. In our datasets of Table~\ref{fig:files} the files {\sf temp}, {\sf rand} and {\sf ran127} already satisfy this restriction. To include also the other files we changed them by transforming each value $x$ to $(x \bmod 255) - 127$. With this transformation all values are forced into the range $[-127,127]$ and, if neighboring values are not too distant, their relative order is usually unchanged. In Table~\ref{tab:speed_sso} the names of the files whose values has been modified with the above transformation are shown in boldface.

From the results in Table~\ref{tab:speed_sso} we see that \sso running time is proportional to the input file length and is roughly thirty times faster than \scan. Indeed, for $|P|=10$ \sso is the fastest algorithm for the (modified) {\sf ecg} file and is faster than \smi for $q=3$ also for {\sf randw}, {\sf rand} and {\sf ran127}. However, for larger values of the pattern length \smi becomes significantly faster than \sso. We also notice that for $|P|=15$ there is not a clear winner between \fmo and \sso, while for $|P|=20$ \fmo is consistently faster.

Summing up, we believe that our experiments indicate that for sufficiently large files it pays to build an index also for the order-preserving matching problem. Compared to the exact matching problem, the index performance for order-preserving matching is less predictable because the proposed indices, \smi and \fmo, are only ``approximate'' and must deal with the possible presence of false positives. Nevertheless, the advantages of using an index are clear, especially when the index can be compressed to use significantly less space than the original file, as in \smi. On the other hand, carefully engineered scan-based algorithms, like \sso, will always be needed for short-medium files, or for very short patterns, or for the case in which the text cannot be preprocessed in advance.




\begin{table} 
\begin{center}
\begin{tabularx}{0.95\textwidth}{|l|R|R|R|R|R|R|}
\hline 
$|P|=10$       & \prcz  & {temp} & \ecgz  & \randwz& rand   & ran127   \\\hline\hline
ave \#  occ    & 358.22 &   8.42 & 7460.50& 622.32 &   3.03 &  10.66 \\\hline\hline
\smi $q\!=\!3$ &   6.37 &   6.16 &  27.76 &  22.42 &  26.96 &  46.27 \\\hline
\smi $q\!=\!6$ &   2.19 &   0.62 &  31.73 &  10.97 &   1.35 &   4.28 \\\hline
\smi $q\!=\!9$ &   9.30 &   2.13 &  40.09 &  42.38 &  10.55 &  40.20 \\\hline\hline
\sso           &   7.41 &   6.70 &   5.36 &  14.14 &  11.81 &  14.08 \\\hline\hline
\fmo           &  74.76 &  71.88 & 133.73 & 187.01 & 397.84 & 397.93 \\\hline\hline
\scan          & 254.01 & 239.98 & 139.96 & 417.25 & 420.18 & 442.69 \\\hline
\end{tabularx}

\vspace{0.5cm}

\begin{tabularx}{0.95\textwidth}{|l|R|R|R|R|R|R|}
\hline 
$|P|=15$       & \prcz   & {temp}& \ecgz  & \randwz& rand   & ran127   \\\hline\hline
ave \#  occ    & 169.93 &   1.00 & 880.49 &   1.22 &   1.00 &   1.00 \\\hline\hline
\smi $q\!=\!3$ &   0.42 &   0.03 &   3.17 &   0.12 &   0.17 &   0.40 \\\hline
\smi $q\!=\!6$ &   0.45 &   0.02 &   3.79 &   0.03 &   0.02 &   0.02 \\\hline
\smi $q\!=\!9$ &   0.50 &   0.02 &   4.47 &   0.07 &   0.03 &   0.03 \\\hline\hline
\sso           &   7.37 &   6.41 &   5.28 &  13.58 &  10.92 &  13.82 \\\hline\hline
\fmo           &   2.93 &   2.57 &  17.05 &   6.47 &  21.00 &  21.63 \\\hline\hline
\scan          & 249.15 & 240.32 & 139.32 & 415.13 & 406.74 & 454.96 \\\hline
\end{tabularx}

\vspace{0.5cm}

\begin{tabularx}{0.95\textwidth}{|l|R|R|R|R|R|R|}
\hline 
$|P|=20$       & \prcz  & {temp} & \ecgz  & \randwz& rand   & ran127   \\\hline\hline
ave \#  occ    &   1.08 &   1.00 &  42.50 &   1.00 &   1.00 &   1.00 \\\hline\hline
\smi $q\!=\!3$ &   0.01 &   0.01 &   0.22 &   0.01 &   0.01 &   0.01 \\\hline
\smi $q\!=\!6$ &   0.02 &   0.02 &   0.21 &   0.02 &   0.02 &   0.02 \\\hline
\smi $q\!=\!9$ &   0.03 &   0.03 &   0.26 &   0.03 &   0.03 &   0.03 \\\hline\hline
\sso           &   7.07 &   6.36 &   5.18 &  12.82 &  10.34 &  13.47 \\\hline\hline
\fmo           &   0.15 &   0.12 &   2.20 &   0.26 &   1.26 &   1.38 \\\hline\hline
\scan          & 247.43 & 234.43 & 137.68 & 402.10 & 405.34 & 468.98 \\\hline

\end{tabularx}
\end{center}\caption{Average \smi running times for searching 1000 random patterns of length 10, 15, and 20 for different values of $q$ and $B=32$ compared with the running times of the algorithms \sso, \fmo, and \scan. Running times are in milliseconds and do not include the time to load/compute the compressed index (for \ssmi and \fmo) or the uncompressed text (for \sso, \fmo, and \scan). The names of the files whose content has been modified to force all values in the range $[-127,127]$ is in boldface.\label{tab:speed_sso}}
\end{table}

\section{Concluding Remarks}

In this paper we have proposed a compressed index for the order-preserving
pattern matching problem. Our approach is based on the new idea of
splitting the original sequence into two complementary components: the order
component and the $\delta$ component. The problem of finding the order-preserving occurrences of a pattern is transformed into an exact search
problem on the order component followed by a verification phase using the
$\delta$ component. Experiments show that our index has a space usage similar
to \gzip\ and can find order-preserving occurrences much faster than a
sequential scan.

Our approach is quite general and improvements could be obtained by changing
some implementation choices. For example, we index the order component using
a Wavelet-Tree based FM-index; to improve the performances for inputs with
many (order-preserving) repetitions we can use a different compressed
full-text index, for example adapting the one recently proposed in~\cite{soda/GagieNP18}. Notice also that the compression of the $\delta$ component can be radically changed without altering the overall scheme.

Finally, we define the order component considering the position of
the predecessor of each element in a sliding window. It is natural to try to
extend the approach by also considering the successor. This can be done
representing each element with a $\langle$predecessor, successor$\rangle$
pair, or using a second index storing the successor information.

\section*{Acknowledgments}

The authors would like to than Jorma Tarhio and Tamanna Chhabra for providing the code of \sso and \fmo, and for assistance in the testing of those algorithms.

\bibliographystyle{plain}

\bibliography{stockm}

\begin{thebibliography}{10}

\bibitem{isaac/BelazzouguiPRV13}
Djamal Belazzougui, Adeline Pierrot, Mathieu Raffinot, and St{\'{e}}phane
  Vialette.
\newblock Single and multiple consecutive permutation motif search.
\newblock In {\em {ISAAC}}, volume 8283 of {\em Lecture Notes in Computer
  Science}, pages 66--77. Springer, 2013.

\bibitem{stringology/CantoneFK15}
Domenico Cantone, Simone Faro, and M.~Oguzhan K{\"{u}}lekci.
\newblock An efficient skip-search approach to the order-preserving pattern
  matching problem.
\newblock In {\em Stringology}, pages 22--35. Department of Theoretical
  Computer Science, Czech Technical University in Prague, 2015.

\bibitem{spe/TSOT16}
Tamanna Chhabra, Simone Faro, M.~Oguzhan K{\"{u}}lekci, and Jorma Tarhio.
\newblock Engineering order-preserving pattern matching with {SIMD}
  parallelism.
\newblock {\em Softw., Pract. Exper.}, 47(5):731--739, 2017.

\bibitem{spire/ChhabraGT15}
Tamanna Chhabra, Emanuele Giaquinta, and Jorma Tarhio.
\newblock Filtration algorithms for approximate order-preserving matching.
\newblock In {\em {SPIRE}}, volume 9309 of {\em Lecture Notes in Computer
  Science}, pages 177--187. Springer, 2015.

\bibitem{stringology/ChhabraKT15}
Tamanna Chhabra, M.~Oguzhan K{\"{u}}lekci, and Jorma Tarhio.
\newblock Alternative algorithms for order-preserving matching.
\newblock In {\em Stringology}, pages 36--46. Department of Theoretical
  Computer Science, Czech Technical University in Prague, 2015.

\bibitem{wea/ChhabraT14}
Tamanna Chhabra and Jorma Tarhio.
\newblock Order-preserving matching with filtration.
\newblock In {\em {SEA}}, volume 8504 of {\em Lecture Notes in Computer
  Science}, pages 307--314. Springer, 2014.

\bibitem{ipl/ChhabraT16}
Tamanna Chhabra and Jorma Tarhio.
\newblock A filtration method for order-preserving matching.
\newblock {\em Inf. Process. Lett.}, 116(2):71--74, 2016.

\bibitem{ipl/ChoNPS15}
Sukhyeun Cho, Joong~Chae Na, Kunsoo Park, and Jeong~Seop Sim.
\newblock A fast algorithm for order-preserving pattern matching.
\newblock {\em Inf. Process. Lett.}, 115(2):397--402, 2015.

\bibitem{spire/CrochemoreIKKLPRRW13}
Maxime Crochemore, Costas Iliopoulos, Tomasz Kociumaka, Marcin Kubica, Alessio
  Langiu, Solon Pissis, Jakub Radoszewski, Wojciech Rytter, and Tomasz Walen.
\newblock Order-preserving incomplete suffix trees and order-preserving
  indexes.
\newblock In {\em {SPIRE}}, volume 8214 of {\em Lecture Notes in Computer
  Science}, pages 84--95. Springer, 2013.

\bibitem{tcs/CrochemoreIKKLP16}
Maxime Crochemore, Costas~S. Iliopoulos, Tomasz Kociumaka, Marcin Kubica,
  Alessio Langiu, Solon~P. Pissis, Jakub Radoszewski, Wojciech Rytter, and
  Tomasz Walen.
\newblock Order-preserving indexing.
\newblock {\em Theor. Comput. Sci.}, 638:122--135, 2016.

\bibitem{dcc/DecaroliGM17}
Gianni Decaroli, Travis Gagie, and Giovanni Manzini.
\newblock A compact index for order-preserving pattern matching.
\newblock In {\em {DCC}}, pages 72--81. {IEEE}, 2017.

\bibitem{FaroK16}
Simone Faro and M.~Oguzhan K{\"{u}}lekci.
\newblock Efficient algorithms for the order preserving pattern matching
  problem.
\newblock In {\em {AAIM}}, volume 9778 of {\em Lecture Notes in Computer
  Science}, pages 185--196. Springer, 2016.

\bibitem{isci/FerraginaM01}
Paolo Ferragina and Giovanni Manzini.
\newblock An experimental study of a compressed index.
\newblock {\em Inf. Sci.}, 135(1-2):13--28, 2001.

\bibitem{FM05}
Paolo Ferragina and Giovanni Manzini.
\newblock Indexing compressed text.
\newblock {\em Journal of the ACM (JACM)}, 52(4):552--581, 2005.

\bibitem{esa/GagieMV17}
Travis Gagie, Giovanni Manzini, and Rossano Venturini.
\newblock An encoding for order-preserving matching.
\newblock In {\em {ESA}}, volume~87 of {\em LIPIcs}, pages 38:1--38:15. Schloss
  Dagstuhl - Leibniz-Zentrum fuer Informatik, 2017.

\bibitem{soda/GagieNP18}
Travis Gagie, Gonzalo Navarro, and Nicola Prezza.
\newblock Optimal-time text indexing in bwt-runs bounded space.
\newblock In {\em {SODA}}, pages 1459--1477. {SIAM}, 2018.

\bibitem{cpm/GawrychowskiU14}
Pawel Gawrychowski and Przemyslaw Uznanski.
\newblock Order-preserving pattern matching with $k$ mismatches.
\newblock In {\em {CPM}}, volume 8486 of {\em Lecture Notes in Computer
  Science}, pages 130--139. Springer, 2014.

\bibitem{wea/GogBMP14}
Simon Gog, Timo Beller, Alistair Moffat, and Matthias Petri.
\newblock From theory to practice: Plug and play with succinct data structures.
\newblock In {\em {SEA}}, volume 8504 of {\em Lecture Notes in Computer
  Science}, pages 326--337. Springer, 2014.

\bibitem{Gusfield:97}
D.~Gusfield.
\newblock {\em Algorithms on Strings, Trees, and Sequences: Computer Science
  and Computational Biology}.
\newblock Cambridge University Press, 1997.

\bibitem{tcs/KimEFHIPPT14}
Jinil Kim, Peter Eades, Rudolf Fleischer, Seok{-}Hee Hong, Costas~S.
  Iliopoulos, Kunsoo Park, Simon~J. Puglisi, and Takeshi Tokuyama.
\newblock Order-preserving matching.
\newblock {\em Theor. Comput. Sci.}, 525:68--79, 2014.

\bibitem{ipl/KubicaKRRW13}
Marcin Kubica, Tomasz Kulczynski, Jakub Radoszewski, Wojciech Rytter, and
  Tomasz Walen.
\newblock A linear time algorithm for consecutive permutation pattern matching.
\newblock {\em Inf. Process. Lett.}, 113(12):430--433, 2013.

\bibitem{spe/ManziniR04}
Giovanni Manzini and Marcella Rastero.
\newblock A simple and fast {DNA} compressor.
\newblock {\em Softw., Pract. Exper.}, 34(14):1397--1411, 2004.

\bibitem{csur/NavarroM07}
Gonzalo Navarro and Veli M{\"{a}}kinen.
\newblock Compressed full-text indexes.
\newblock {\em {ACM} Comput. Surv.}, 39(1), 2007.

\end{thebibliography}

\end{document}